\documentclass{article}  %use this for latex2e
\usepackage{amssymb}
\usepackage{amsfonts}
\usepackage{amsbsy}
\usepackage{amsmath}

\usepackage{delarray,multirow,array}
\usepackage[latin1]{inputenc}
\usepackage{color}

\textheight 21cm
\textwidth 15.27cm
\oddsidemargin 0in
\evensidemargin 0in

\newtheorem{theorem}{Theorem}[section]
\newtheorem{corollary}{Corollary}[section]

\newtheorem{remark}{Remark}[section]
\newtheorem{example}{Example}[section]
\newtheorem{proposition}{Proposition}[section]
\def\F{{\mathbb{F}}}

\newenvironment{proof}{\textbf{Proof:}}{\hspace*{\fill}
\nolinebreak\hspace*{\fill}$\Box$\newline\vspace{1mm}}

\newcommand{\im}{\textnormal{im}}
\newcommand{\sgn}{\rm sgn}

\newcommand{\N}{{\mathbb N}}

\title{\LARGE \bf
A new class of superregular matrices and MDP convolutional codes
}

\author{\small P. Almeida, D. Napp and R. Pinto}
%\author[rvt]{P. Almeida \corref{cor1}}%\fnref{fn2}}
%\ead{palmeida@ua.pt}

%\author[focal]{D. Napp  }%\fnref{fn1}}
%\ead{diegonapp@gmail.com}

%\author[rvt]{R. Pinto }%\fnref{fn2}}
%\ead{raquel@ua.pt}

%\cortext[cor1]{Corresponding author}

%\address[rvt]{CIDMA - Center for Research and Development in Mathematics and Applications, Department of Mathematics, University of Aveiro, %Portugal.}
%\address[focal]{Departament de Matemàtiques
%Universitat Jaume I
%Campus de Riu Sec
%E-12071 Castelló de la Plana
%Espanya (Spain). }%Departamento de Ingenieria de Sistemas y Automatica, Universidad de Valladolid, 47011 Valladolid, Spain.}

\begin{document}

\begin{abstract}
This paper deals with the problem of constructing superregular matrices that lead to MDP convolutional codes. These matrices are a type of lower block triangular Toeplitz matrices with the property that all the square submatrices that can possibly be nonsingular due to the lower block triangular structure are nonsingular. We present a new class of matrices that are superregular over a sufficiently large finite field $\F$. Such construction works for any given choice of characteristic of the field $\F$ and code parameters $(n,k,\delta)$ such that $(n-k) | \delta$. Finally, we discuss the size of $\F$ needed so that the proposed matrices are superregular.

\vspace{7mm}
{\bf Keywords:} Convolutional codes; Column distances; Maximum distance profile; Superregular matrices.

\vspace{7mm}
2000 MSC: 94B10, 15B05
\end{abstract}

%\begin{keyword}
%Convolutional codes; Column distances; Maximum distance profile; Superregular matrices.

%2000 MSC: 94B10, 15B05
%\end{keyword}

\maketitle

\section{Introduction}

%
%In the transmission of digital information, the transmitted message can be corrupted by noise. Coding is a technique which consists in the introduction of redundancy in the message, in order to allow the  detection and correction of the errors occurred during the message transmission. Codes with good error correcting properties are the ones with large distance.

In recent years, renewed efforts have been made to further analyze the distance properties of convolutional codes \cite{Gluesing-Luerssen2006,Hutchinson2008b,Hutchinson2005,Hutchinson2008,Munoz2006,SmGlRo,Tomas2010PhDT,Tomas2009}. Convolutional codes with the maximum possible distance (for a given choice of parameters) are called maximum distance separable (MDS). However, for error control purposes it is also important to consider codes with large column distances.

The convolutional codes whose column distances increase as rapidly as possible for as long as possible are called maximum distance profile (MDP) codes. These codes are specially appealing for the performance of sequential decoding algorithms as they have the potential to have a maximum number of errors corrected per time interval. In \cite{RoSm} a non-constructive proof of the existence of such codes (for all transmission rates and all degrees) was given. However, the problem of how to construct MDP codes is far from being solved and very little is known about the minimum field size required for doing so. It turns out that this issue has been connected to the construction of a particular type of \emph{superregular} matrices. In \cite{Gluesing-Luerssen2006} a concrete construction of superregular matrices is given for all parameters $(n,k,\delta)$ although over a field with a large characteristic and size. In  \cite{Hutchinson2008} the size of the field needed to have a superregular matrix is studied. They provide a bound on this size and conjecture the existence of a much tighter bound based on examples and computer searches.

In this paper, we will address these issues and present a new class of matrices that are superregular over a sufficiently large finite field $\F$ of any characteristic. We also provide a bound on the required field size needed for such matrices to be superregular.

\section{Preliminaries: MDP convolutional codes and superregular matrices}

In this section, we recall basic material from the theory of convolutional codes that is relevant to the presented work and link it to the notion of superregular matrices.

In this paper, we consider convolutional codes constituted by codewords having finite support: Let $\mathbb F$ be a finite field and $\mathbb F[z]$ the ring of polynomials with coefficients in $\mathbb F$. A \textbf{convolutional code} ${\cal C}$ of rate $k/n$ is a $\mathbb{F}[z]$-submodule of $\mathbb{F}[z]^n$ of rank $k$ of the form
\[{\mathcal C}  =  \im_{\mathbb F[z]}G(z) = \{G(z)u(z):\, u(z) \in \mathbb F^{k}[z]\},\]
where $G(z)\in \mathbb{F}[z]^{n \times k}$ is a right-invertible matrix over $\mathbb F[z]$, i.e., there exists a matrix, called the parity check matrix, $H(z)\in\mathbb{F}[z]^{(n-k) \times n} $ such that
\begin{equation}\label{eq:conv_code_parity_check}
     \im_{\mathbb F[z]} G(z)=\ker_{\mathbb F[z]} H(z)=\{ v(z) \in \mathbb{F}[z]^n : \ H(z)v(z)=0 \}.
\end{equation}
The \emph{degree} of $\cal C$, denoted by $\delta$, is defined as the maximum degree of the full size minors of $G(z)$. Notice that we can also choose $H(z)$ to be left invertible over $\mathbb F[z]$, and in this case $\delta$ will also be equal to the maximum degree of the full size minors of $H(z)$. A convolutional code of rate $k/n$ and degree $\delta$ is called an $(n,k, \delta)$ convolutional code.

The most important property of a code is its distance, defined as follows: The weight of a polynomial vector $v(z)=\sum_{i \in \mathbb N} v_i z^i\in \mathbb F[z]^n$ is given by ${\rm wt}(v)=\sum_{i\in \mathbb N} {\rm wt}(v_i)$, where ${\rm wt}(v_i)$ is the number of nonzero elements of $v_i$. The \textbf{distance} of a convolutional code $\cal C$ is defined as
\[{\rm d}({\cal C})={\rm min}\{{\rm wt}(v(z)) \, | \, v(z) \in {\cal C}, v(z) \neq 0 \}.\]
If ${\cal C}=\ker_{\mathbb F[z]}  H(z)$, where $H(z)=\displaystyle \sum_{i=0}^{\nu} H_i z^i$, for some $\nu \in \mathbb N$, then the $j$-th \textbf{column distance} of $\cal C$ is defined as
\begin{eqnarray*}
d_j^c &=& \min\{ wt(v_{[0,j]})=wt(v_0+ v_1z + \dots + v_j z^j) : \ v(z)=\sum_{i \in \mathbb N} v_i z^i\in {\cal C} \mbox{ and } \ v_0 \neq 0  \}\\
      &=&  \min\{ wt(\vec{v}_j) : \ \vec{v}_j =[v_0 \dots v_j]\in \F^{(j+1)n}, \ {\cal H}(H_0,\dots,H_j) \vec{v}_j^\top=0,  \ v(z)=\sum_{i \in \mathbb N} v_i z^i\in {\cal C} \mbox{ and }\ v_0\neq 0  \}.
\end{eqnarray*}
where
\begin{equation}\label{eq:08}
{\cal H}(H_0,\dots,H_j)=\left(
  \begin{array}{ccccc}
    H_0 & 0 & 0 & \cdots & 0 \\
    H_1 & H_0 & 0 & \cdots & 0 \\
    H_2 & H_1 & H_0 & \cdots & 0 \\
    \vdots & \vdots & \vdots & \vdots & \vdots \\
    H_j & H_{j-1} & \cdots & \cdots & H_0 \\
  \end{array}
\right) \in \mathbb F^{(j+1)(n-k)\times (j+1)n},
\end{equation}
and $H_j=0$ for $j>\nu$.

In this paper we focus on this important notion of column distance. This notion is closely related to the notion of optimum distance profile (ODP), see \cite[pp.112]{johannessonZ99}. The following results about column distances are proved in \cite{Gluesing-Luerssen2006}.

\begin{proposition}
Let $\cal C$ be an $(n,k, \delta)$ convolutional code and $L=\lfloor \dfrac{\delta}{k} \rfloor + \lfloor \dfrac{\delta}{n-k} \rfloor$. Then
\begin{itemize}
\item[i)] $d^c_j({\cal C}) \leq (j+1)(n-k)+1, \; \forall j \in \mathbb N_0$;
\item[ii)] if there exists $j \in \mathbb{N}_0$ such that $d^c_j({\cal C}) = (j+1)(n-k)+1$, then $d^c_i({\cal C}) = (i+1)(n-k)+1$, for $i \leq j$ and $j \leq L$.
\end{itemize}
\end{proposition}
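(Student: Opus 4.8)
The plan is to reduce both parts to linear-algebra statements about the truncated sliding parity-check matrix $\mathcal H(H_0,\dots,H_j)$ in \eqref{eq:08}, which has $(j+1)(n-k)$ rows and $(j+1)n$ columns. For part i), I would argue that $d^c_j$ is the minimum weight of a nonzero vector $\vec v_j=[v_0\dots v_j]$ in the kernel of $\mathcal H(H_0,\dots,H_j)$ having $v_0\neq 0$. Since this matrix has full row rank $(j+1)(n-k)$ (its block-diagonal pattern, with the invertible $H_0$ on the diagonal, guarantees this once we choose $H(z)$ left-invertible so that $H_0$ has full rank $n-k$), any set of $(j+1)(n-k)$ columns can be "filled out" to a dependent set by adjoining just one more column; more carefully, fixing the support of $v_0$ to be a single coordinate and then greedily completing, one finds a kernel vector supported on at most $(j+1)(n-k)+1$ coordinates with $v_0\neq 0$. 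This gives the upper bound. I expect this to be the easier of the two parts, essentially a rank/Singleton-type counting argument on $\mathcal H$.

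For part ii), the key observation is monotonicity plus a "propagation downward" phenomenon: if $d^c_j=(j+1)(n-k)+1$ then the truncated matrix $\mathcal H(H_0,\dots,H_j)$ is, in the appropriate sense, maximally generic — every square submatrix of the right shape that the lower-triangular Toeplitz structure does not force to be singular is in fact nonsingular (this is exactly the superregularity condition referenced in the abstract). From such a matrix one extracts, by deleting the last block row and the last block column, the matrix $\mathcal H(H_0,\dots,H_{j-1})$, which inherits the same genericity; hence $d^c_{j-1}=j(n-k)+1$, and by induction $d^c_i=(i+1)(n-k)+1$ for all $i\le j$. For the bound $j\le L$ with $L=\lfloor\delta/k\rfloor+\lfloor\delta/(n-k)\rfloor$, I would invoke the standard fact that beyond the $L$-th step the generic-genericity of $\mathcal H$ breaks down because the predictable degree constraints on $G(z)$ (equivalently on $H(z)$, which has degree at most $\lceil\delta/(n-k)\rceil$ roughly) force some of the relevant minors to vanish; concretely, one counts rows versus columns in the relevant submatrix of $\mathcal H$ and finds that for $j>L$ there are simply too many linearly independent relations, so a codeword of weight $(j+1)(n-k)+1$ with $v_0\neq 0$ cannot exist while keeping the earlier column distances at their maximum — or rather, the combinatorics of the minimal-degree generator matrix caps $j$ at $L$.

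The main obstacle I anticipate is the $j\le L$ bound in part ii): the downward-propagation and monotonicity arguments are routine, but showing that $L$ is exactly the right threshold requires carefully relating the degree $\delta$ of the code to the block-band structure of both $G(z)$ and $H(z)$, using the Forney indices and the fact that a minimal generator matrix has row degrees summing to $\delta$ (dually, the parity-check matrix has an analogous degree bound governed by $\delta$ and $n-k$). One must translate "$d^c_j$ attains the bound" into a statement that a certain $\lceil\cdot\rceil$-sized window of the Toeplitz matrix is nonsingular, and then check that such a window exceeds the available "degrees of freedom" once $j>\lfloor\delta/k\rfloor+\lfloor\delta/(n-k)\rfloor$. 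I would handle this by writing out the sizes explicitly and citing the dual characterizations of column distance via $G(z)$ and via $H(z)$; since the statement is attributed to \cite{Gluesing-Luerssen2006}, I would also be prepared to simply cite that reference for this threshold rather than reprove it from scratch.
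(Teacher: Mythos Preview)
The paper does not give its own proof of this proposition: it simply states the result and attributes it to \cite{Gluesing-Luerssen2006} with the sentence ``The following results about column distances are proved in \cite{Gluesing-Luerssen2006}.'' Your closing remark, that you would be prepared to cite that reference rather than reprove the threshold from scratch, is therefore exactly what the paper does for the entire proposition, not just the $j\le L$ part.

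As for the outline itself: your sketch for part i) and the downward-propagation half of part ii) is the standard argument and is essentially correct, though the step ``fixing the support of $v_0$ to be a single coordinate and then greedily completing'' needs a little more care to guarantee the resulting kernel vector really has $v_0\neq 0$ (one typically argues via the column structure of $\mathcal H(H_0,\dots,H_j)$ and the full rank of $H_0$). The genuine content, as you correctly identify, is the bound $j\le L$; this is where \cite{Gluesing-Luerssen2006} does the real work, relating the column distances to the row degrees of a minimal generator matrix and to the generalized Singleton bound. Since the present paper treats the proposition purely as a cited background fact, there is no paper-internal proof to compare against.
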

A convolutional code $\cal C$ is called \textbf{maximum distance profile} (\textbf{MDP}) if its column distances achieve the maximum possible values (for a given choice of parameters), i.e., if $\cal C$ has rate $k/n$ and degree $\delta$, then $d_L^{c}({\cal C})=(L+1)(n-k)+1$, for $L=\lfloor \delta/k \rfloor + \lfloor \delta/(n-k) \rfloor$ and so $d_j^{c}({\cal C})=(j+1)(n-k)+1$, for $j \leq L$. In order to characterize MDP codes we need to introduce the notion of superregular matrices.

Let $A=[\mu_{ij}]$ be a square matrix of order $m$ over $\mathbb{F}$ and $S_m$ the symmetric group of order $m$. Recall that the determinant of $A$ is given by
\begin{equation}\label{deter}
|A|=\sum_{\sigma\in S_m}(-1)^{\sgn(\sigma)}\mu_{1\sigma(1)}\cdots \mu_{m\sigma(m)}.
\end{equation}
A {\em trivial term} of the determinant is a term of (\ref{deter}), $\mu_{1\sigma(1)}\cdots \mu_{m\sigma(m)}$, such that exists $1\leq i\leq m$ with $\mu_{i\sigma(i)}=0$. If $A$ is a square submatrix of a matrix $B$, with entries in $\mathbb{F}$, and all the terms of the determinant of $A$ are trivial we say that $|A|$ is a \textbf{trivial minor} of $B$. We say that $B$ is \textbf{superregular} if all its non-trivial minors are different from zero.\\

It is important to remark here that there exist several related, but different, notions of superregular matrices in the literature. Unfortunately, all these notions are only particular cases of the more general definition given above. Frequently, see for instance \cite{Roth1989}, a superregular matrix is defined to be a matrix for which every square submatrix is nonsingular. Obviously all the entries of these matrices must be nonzero. Also, in \cite{Aid1986,MaSl,Roth1985}, several examples of triangular matrices were constructed in such a way that all submatrices inside this triangular configuration were nonsingular. However, all these notions do not apply to our case as they do not consider submatrices that contain zeros. The more recent contributions \cite{Gluesing-Luerssen2006,Hutchinson2008b,Hutchinson2008,Tomas2009,Tomas2010PhDT} consider the same notion of superregularity as us, but defined only for lower triangular matrices.

Next theorem shows how MDP $(n, k, \delta)$ convolutional codes with $(n-k)| \delta$ can be characterized by superregular matrices (see \cite[Theorem 3.1]{Gluesing-Luerssen2006}).

\begin{theorem}\label{th:MDP_char}
Let ${\cal C}$ be an $(n,k, \delta)$ convolutional code such that $(n-k)| \delta $ and represented as ${\cal C}=\ker_{\mathbb F[z]}  [A(z) \ B(z) ]$  where $A(z)=\displaystyle \sum_{i=0}^{\nu} A_i z^i\in \mathbb F[z]^{(n-k)\times (n-k)}$, $B(z)=\displaystyle \sum_{i=0}^{\nu} B_i z^i\in \mathbb F[z]^{(n-k)\times k}$, $\nu= \frac{\delta}{(n-k)}$. We can assume without lost of generality that $A_0=I_{n-k}$. Furthermore, let
$$
A(z)^{-1}B(z) = \displaystyle \sum_{i=0}^{\infty} {\bar H}_i z^i\in \mathbb F((z))^{(n-k)\times k}
$$
be the Laurent expansion of $A(z)^{-1}B(z) $ over the field $\mathbb F((z))$ of Laurent series. Define $L=\lfloor \delta/k \rfloor +  \delta/(n-k)$ and
$$
 \widehat{{\bar H}}= [I_{(L+1)(n-k)} \ {\cal {\bar H}}({\bar H}_0,\dots,{\bar H}_L)] \mbox{ where}
$$
\begin{equation}\label{eq:04}
{\cal {\bar H}}({\bar H}_0,\dots,{\bar H}_L)=\left(
  \begin{array}{ccccc}
    {\bar H}_0 & 0 & 0 & \cdots & 0 \\
    {\bar H}_1 & {\bar H}_0 & 0 & \cdots & 0 \\
    {\bar H}_2 & {\bar H}_1 & {\bar H}_0 & \cdots & 0 \\
    \vdots & \vdots & \vdots & \vdots & \vdots \\
    {\bar H}_L & {\bar H}_{L-1} & \cdots & \cdots & {\bar H}_0 \\
  \end{array}
\right) \in \mathbb F^{(L+1)(n-k)\times (L+1)k}.
\end{equation}

The following are equivalent:
\begin{enumerate}
  \item $\cal C$ is MDP.
  \item ${\cal {\bar H}}({\bar H}_0,\dots,{\bar H}_L)$ is superregular.
\end{enumerate}
\end{theorem}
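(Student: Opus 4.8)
The plan is to prove the equivalence by connecting the superregularity of $\mathcal{\bar H}(\bar H_0,\dots,\bar H_L)$ to the column distances $d_j^c(\mathcal{C})$ via the parity check description. First I would recall that, since $A_0 = I_{n-k}$ is invertible, the code $\mathcal{C} = \ker_{\mathbb F[z]}[A(z)\ B(z)]$ admits an equivalent (over $\mathbb F((z))$, or equivalently after left-multiplication by $A(z)^{-1}$) parity check of the ``systematic'' form $[I_{n-k}\ -A(z)^{-1}B(z)]$ — up to sign conventions this means the sliding matrix ${\cal H}(H_0,\dots,H_j)$ appearing in the definition of $d_j^c$ can be taken to be $[I_{(j+1)(n-k)}\ \mathcal{\bar H}(\bar H_0,\dots,\bar H_j)]$ after column operations, so that the condition ${\cal H}\,\vec v_j^\top = 0$ expresses the ``$I$-block'' coordinates of a codeword truncation as a fixed linear image of the remaining $(j+1)k$ coordinates. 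The key point is that a truncated codeword $\vec v_j$ with $v_0 \neq 0$ of low weight corresponds exactly to choosing a low-weight vector in the ``free'' $(j+1)k$ coordinates and reading off the forced ``$I$-block'' coordinates, and $d_j^c = (j+1)(n-k)+1$ is the statement that no such choice gives total weight $\le (j+1)(n-k)$.

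Next I would translate the weight condition into a rank/nonsingularity statement about submatrices of $\mathcal{\bar H}$. The standard argument (essentially Proposition 2.2 of \cite{Gluesing-Luerssen2006} or the analogous Reed--Solomon-style computation) is: $d_j^c \ge (j+1)(n-k)+1$ holds if and only if every ``admissible'' square submatrix of $\mathcal{\bar H}(\bar H_0,\dots,\bar H_j)$ — meaning one obtained by deleting rows and columns in a way compatible with the block-Toeplitz lower-triangular shape, i.e. exactly the submatrices whose determinant is not forced to vanish — is nonsingular. Concretely, a codeword truncation of weight exactly $(j+1)(n-k)$ with $v_0\neq0$ would have to be supported so that the $I$-block contributes $t$ nonzero positions and the free block contributes $(j+1)(n-k)-t$ nonzero positions for various $t$; writing out the linear dependence forces a particular square submatrix of $\mathcal{\bar H}$ to be singular, and conversely any singular non-trivial minor produces such a low-weight word. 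I would carry this out for the chain $j = 0, 1, \dots, L$ simultaneously, using part (ii) of Proposition 2.1 to see that $d_L^c = (L+1)(n-k)+1$ already forces $d_j^c = (j+1)(n-k)+1$ for all $j \le L$, which is why it suffices to control the single matrix $\mathcal{\bar H}(\bar H_0,\dots,\bar H_L)$ and all its non-trivial minors rather than a separate matrix for each $j$.

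For the direction ``MDP $\Rightarrow$ superregular'', I would argue contrapositively: a non-trivial minor of $\mathcal{\bar H}$ that vanishes yields, by solving the corresponding homogeneous system, a nonzero truncated codeword $\vec v_j$ (for the appropriate $j \le L$) with $v_0 \neq 0$ and $\mathrm{wt}(\vec v_j) \le (j+1)(n-k)$, contradicting $d_j^c = (j+1)(n-k)+1$; one has to check that the minor being \emph{non-trivial} is exactly what guarantees $v_0 \neq 0$ and that the support count comes out to at most $(j+1)(n-k)$. For ``superregular $\Rightarrow$ MDP'', I would show that if $d_L^c \le L(n-k)+ (n-k)$ — i.e. strictly below the Singleton-type bound of Proposition 2.1(i) — then the minimal-weight truncated codeword, after removing columns where it is zero, exhibits a square submatrix of $\mathcal{\bar H}(\bar H_0,\dots,\bar H_L)$ whose determinant is a non-trivial minor and which is singular (because the codeword gives a nontrivial kernel vector), again a contradiction.

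The main obstacle, and the step deserving the most care, is the precise bookkeeping in the bijection between ``low-weight truncated codewords with $v_0\neq0$'' and ``singular non-trivial minors'': one must verify that the combinatorial pattern of which rows and columns get deleted is \emph{exactly} the pattern that makes the minor non-trivial (no accidental extra zeros forcing triviality, and the $v_0\neq0$ constraint matching the ``full first block-column cannot be skipped'' condition), and that the weight count on both sides matches up to the right power $(n-k)$ rather than being off by a block. This is the standard but delicate heart of the Gluesing-Luerssen–Rosenthal–Smarandache characterization, and I would spend most of the proof making the shape of the relevant submatrices explicit, perhaps via a small example with $n-k$ and $k$ fixed, before stating the general correspondence.
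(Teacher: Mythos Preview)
The paper does not prove this theorem at all: it is stated with an explicit citation to \cite[Theorem~3.1]{Gluesing-Luerssen2006} and no argument is given in the present paper. So there is no ``paper's own proof'' to compare against beyond that reference.

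That said, your outline is precisely the argument of the cited Gluesing-Luerssen--Rosenthal--Smarandache paper: pass from the parity-check $[A(z)\ B(z)]$ to the systematic form $[I\ A(z)^{-1}B(z)]$ using invertibility of $A_0$, identify the sliding parity-check matrix for $d_j^c$ with $[I_{(j+1)(n-k)}\ \mathcal{\bar H}(\bar H_0,\dots,\bar H_j)]$, and then run the bijection between low-weight truncated codewords with $v_0\neq 0$ and singular non-trivial minors of the block-Toeplitz matrix. Your identification of the delicate point---that ``non-trivial minor'' exactly encodes both the $v_0\neq 0$ condition and the correct weight count---is accurate, and the use of Proposition~2.1(ii) to reduce to the single level $j=L$ is the right shortcut. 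There is nothing to correct; your write-up would simply be supplying the proof that this paper chose to import by citation.
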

Hence, the problem of constructing an MDP convolutional code relies on the problem of constructing superregular lower block triangular Toeplitz  matrices of the form  (\ref{eq:04}). This problem is addressed in the next section.

For the case where $(n-k)\nmid \delta$, similar results were obtained using different methods from systems theory, see \cite{Hutchinson2008b,Hutchinson2005,Hutchinson2008} for more details. We will not consider this case in this paper.

%Note that the parity check matrix of $\cal C$ in Theorem \ref{th:MDP_char} is given in its systematic form. After left multiplication by an invertible matrix and a suitable column permutation on the systematic expression $[I_{(n-k)} \ H(z)] $ one can obtain the parity check as in (\ref{eq:conv_code_parity_check}). Note that these operations preserve (the?) superregularity (property?).
%
%({\color{red} \underline{Raquel} vê la se percebes esta ultima frase....se sim, podemos mudar o "$H$" em $[I_{(n-k)} \ H(z)] $ por $\overline{H}$ (e consequentemente em todos los $H$ e $H_i$ a seguir) para n o confundir com o $H$ de (\ref{eq:conv_code_parity_check}). Se n se percebe podemos apagar a frase. Na verdade, n é muito directa a relaçao entre superregular e os MDP qd n está na form sistematica... eles conseguem a forma sistematica pq trabalham com a representaçao input-output mas para isso precisam de fazem uma realiçao...é mais complicado. }).

%%%%%%%%%%%%%%%%%%%%%%%%%%%%%%%%%%%%%%%%%%%%%%%%%%%%%%%%%%%%%%%%%%%%%%
\section{A new class of MDP codes and superregular matrices}%%%%%%%%%%%%%%%%%%%%%%%%%%%%%%%%%%%%%%%%%%%%%%%%%%%%%%%%%%%%%%%%%%%%%%
%%%%%%%%%%%%%%%%%%%%%%%%%%%%%%%%%%%%%%%%%%%%%%%%%%%%%%%%%%%%%%%%%%%%%%

In this section, we introduce a new class of matrices of the form (\ref{eq:04}) and show that they are superregular matrices over a sufficiently large field $\F$. We conclude the section by providing a lower bound on the field size of $\F$ that ensures the superregularity of the proposed matrices. First, we recall previous contributions on superregular matrices.\\

It is a common practice in building the matrix ${\cal {\bar H}}({\bar H}_0,\dots,{\bar H}_L)$ of Theorem \ref{th:MDP_char} to first construct a large lower triangular superregular matrix in such a way that it contains the lower block triangular Toeplitz matrix ${\cal {\bar H}}({\bar H}_0,\dots,{\bar H}_L)$ as a submatrix.  In \cite{Gluesing-Luerssen2006}, it was shown that for every positive integer $r$ exists a prime $p=p(r)$ such that

\begin{equation}\label{eq_06}
S_r=\left(
  \begin{array}{ccccc}
    \left(
      \begin{array}{c}
        r-1 \\
        0 \\
      \end{array}
    \right)
     & 0 & 0 & \cdots & 0 \\
    \left(
      \begin{array}{c}
        r-1 \\
        1 \\
      \end{array}
    \right) & \left(
      \begin{array}{c}
        r-1 \\
        0 \\
      \end{array}
    \right) & 0 & \cdots & 0 \\
    \left(
      \begin{array}{c}
        r-1 \\
        2 \\
      \end{array}
    \right) & \left(
      \begin{array}{c}
        r-1 \\
        1 \\
      \end{array}
    \right) & \left(
      \begin{array}{c}
        r-1 \\
        0 \\
      \end{array}
    \right) & \cdots & 0 \\
    \vdots & \vdots & \vdots & \vdots & \vdots \\
    \left(
      \begin{array}{c}
        r-1 \\
        r-1 \\
      \end{array}
    \right) & \left(
      \begin{array}{c}
        r-1 \\
        r-2 \\
      \end{array}
    \right) & \cdots & \cdots & \left(
      \begin{array}{c}
        r-1 \\
        0 \\
      \end{array}
    \right) \\
  \end{array}
\right)
\end{equation}
is superregular over $\mathbb F_{p}$. Moreover, the authors proposed the first rough bound on the size of a field $\F$ for a lower triangular Toeplitz matrix $A$ to be superregular over $\F$. Namely if we consider $c$ to be the largest magnitude among the entries of $A$ and if $|\F|>c^r r^{r/2}$, then there exists a superregular lower triangular Toeplitz matrix $A\in\F^{r\times r}$. Later, in \cite{Hutchinson2008}, the following more refined bound was presented:
If $| \F| > B_r$ then there exists a superregular lower triangular Toeplitz matrix $A\in\F^{r \times r}$, where \begin{equation}\label{eq:03}
    B_r=\frac{1}{2}\left( \frac{1}{r} \left(\begin{array}{c}
                              2(r-1) \\
                              r-1
                            \end{array}
    \right)+\left(\begin{array}{c}
                              r-1 \\
                              \lfloor \frac{r-1}{2}\rfloor
                            \end{array}
    \right) \right).
\end{equation}

Moreover, based on examples and computer searches, it was conjectured in \cite{Gluesing-Luerssen2006,Hutchinson2008} that for $\ell\geq 5$ there exists a superregular lower triangular Toeplitz matrix of order $\ell$ over the field $\mathbb F_{2^{\ell-2}}$. If true, it would considerably improve the bound given above. This remains an open problem. \\

We propose a new type of superregular matrices with the form of (\ref{eq:04}). Of course, this will bring about a new class of MDP codes. Let $(n,k,\delta)$ be given such that $(n-k) | \delta$. Let $M=\max\{n-k,k \}$ and $L=\lfloor \delta/k \rfloor + \delta/(n-k)$. Let $\alpha$ be a primitive element of a finite field $\F=\mathbb F_{p^{\mathbb N}}$ and define
\begin{eqnarray}\label{eq_01}
\nonumber && [T_0 | \ T_1 \ | \dots \ |  T_L]= \\
&&\!\!\!\!\!\!\!\!\!\!= \!\! \left[ \!\!
  \begin{array}{cccc|ccc|c|ccc}
    \alpha^{2^0} & \alpha^{2^1} &\cdots & \alpha^{2^{M-1}} & \alpha^{2^{M}} & \cdots & \alpha^{2^{2M-1}} &  &\alpha^{2^{ML}} &  \cdots & \alpha^{2^{M(L+1)-1}} \\
    \alpha^{2^1} & \alpha^{2^2} &\cdots & \alpha^{2^{M}} & \alpha^{2^{M+1}} & \cdots & \alpha^{2^{2M}} & &\alpha^{2^{ML+1}} & \cdots & \alpha^{2^{M(L+1)}} \\
    \alpha^{2^2} & \alpha^{2^3} &\cdots & \alpha^{2^{M+1}} & \alpha^{2^{M+2}}& \cdots & \alpha^{2^{2M+1}}& \cdots & \alpha^{2^{ML+2}}  &  \cdots & \alpha^{2^{M(L+1)+1}} \\
    \vdots & \vdots &\ddots & \vdots & \vdots & \ddots & \vdots &  & \vdots & \ddots & \vdots \\
     \alpha^{2^{M-1}} & \alpha^{2^{M}} &\cdots & \alpha^{2^{2M-2}} & \alpha^{2^{2M-1}} & \cdots & \alpha^{2^{3M-2}} & &\alpha^{2^{M(L+1)-1}}  & \cdots & \alpha^{2^{M(L + 2) -2}} \\
  \end{array}
\!\! \right].
\end{eqnarray}

Define also, ${\cal T}( T_0,T_1, \dots  , T_{L})\in \F^{(L+1)M\times (L+1)M}$ by
\begin{equation}\label{eq:07}
{\cal T}(T_0,\dots,T_L)=\left(
  \begin{array}{ccccc}
    T_0 & 0 & 0 & \cdots & 0 \\
    T_1 & T_0 & 0 & \cdots & 0 \\
    T_2 & T_1 & T_0 & \cdots & 0 \\
    \vdots & \vdots & \vdots & \ddots & \vdots \\
    T_L & T_{L-1} & \cdots & \cdots & T_0 \\
  \end{array}
\right).
\end{equation}

%$with $\alpha$ is a primite root of $\mathbb F$. The field size that we need to construct $V_M$ (i.e., the highest element in $V_M$) is obviously $2^{M(L + 2) -2} $.

We are going to prove that if $N$ is sufficiently large then ${\cal T}( T_0,T_1, \dots  , T_{L})$ is superregular. First, we need the following well known result.%, see for instance \cite{lidlN97}.

\begin{theorem} [\cite{hungerford74}] \label{th:classic_finite_fields}
Let $\mathbb{F}$ be a finite field with $p^N$ elements. Let $\alpha$ be a primitive element of $\mathbb F$ and $\rho(z)$ be the minimal polynomial of $\alpha$ (i.e., $\mathbb F= \mathbb F_p[z]/(\rho(z))$ and deg $\rho(z)=N$). If $f(z) \in \mathbb F_p[z]$ with $f(\alpha)=0$ then $\rho(z)\mid f(z)$.
\end{theorem}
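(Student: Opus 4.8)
The plan is to use the division algorithm in $\mathbb{F}_p[z]$ together with the defining minimality property of the minimal polynomial; this is a standard fact about algebraic elements, so the proof will be short. First I would recall that, since $\alpha$ is a primitive element of $\mathbb{F}$, it generates $\mathbb{F}$ as an extension of $\mathbb{F}_p$; hence the evaluation homomorphism $\mathbb{F}_p[z] \to \mathbb{F}$, $g(z) \mapsto g(\alpha)$, is surjective, and its kernel is a nonzero ideal of the principal ideal domain $\mathbb{F}_p[z]$. Taking $\rho(z)$ to be the monic generator of this kernel identifies $\rho(z)$ as the minimal polynomial of $\alpha$, and the induced isomorphism $\mathbb{F}_p[z]/(\rho(z)) \cong \mathbb{F}$ forces $\deg \rho(z) = [\mathbb{F}:\mathbb{F}_p] = N$; this justifies the two parenthetical assertions in the statement.

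For the main claim, let $f(z) \in \mathbb{F}_p[z]$ satisfy $f(\alpha) = 0$, i.e., $f$ lies in the kernel of the evaluation map. Since that kernel equals the ideal $(\rho(z))$, we get $\rho(z) \mid f(z)$ at once. If a self-contained argument avoiding the language of ideals is preferred, I would instead apply the division algorithm directly: write $f(z) = q(z)\rho(z) + r(z)$ with $q(z), r(z) \in \mathbb{F}_p[z]$ and either $r(z) = 0$ or $\deg r(z) < \deg \rho(z)$. Evaluating at $\alpha$ gives $r(\alpha) = f(\alpha) - q(\alpha)\rho(\alpha) = 0$. But $\rho(z)$ is, by definition, a nonzero polynomial of least degree in $\mathbb{F}_p[z]$ vanishing at $\alpha$, so a nonzero $r(z)$ of strictly smaller degree with $r(\alpha) = 0$ is impossible. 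Hence $r(z) = 0$ and $\rho(z) \mid f(z)$.

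There is essentially no obstacle here: the only point requiring a little care is making precise the two parenthetical identifications ($\deg \rho = N$ and $\mathbb{F} \cong \mathbb{F}_p[z]/(\rho(z))$), which follow from $\alpha$ being primitive and hence a generator of the field extension. Accordingly, in the paper I would simply cite \cite{hungerford74} for the statement and, if any proof is included at all, record only the short division-algorithm argument above.
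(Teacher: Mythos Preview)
Your argument is correct and entirely standard; there is nothing to fix. Note that the paper does not supply its own proof of this theorem at all---it merely states the result and cites \cite{hungerford74}---so your proposal to simply cite the reference (and at most record the short division-algorithm argument) matches exactly what the paper does.
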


\begin{theorem}\label{th:mainFieldSize}
Let $\alpha$ be a primitive element of a finite field $\F$ of characteristic $p$, $\rho(z)$ be the minimal polynomial of $\alpha$ and consider ${\cal T}( T_0,T_1, \dots  , T_{L})\in \F^{(L+1)M\times (L+1)M}$. If $| \mathbb F | \geq p^{(2^{M(L+2)-1})}$ then the matrix ${\cal T}( T_0,T_1, \dots  , T_{L})$ is superregular (over $\F$).
\end{theorem}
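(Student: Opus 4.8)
The plan is to regard each non-trivial minor of $\mathcal{T}(T_0,\dots,T_L)$ as a polynomial in $\alpha$ with integer coefficients, to single out its (unique) monomial of top degree, and then to invoke Theorem~\ref{th:classic_finite_fields}. Index the rows and columns of $\mathcal{T}(T_0,\dots,T_L)$ by $1,\dots,(L+1)M$ and write a row index as $R=bM+r$ and a column index as $C=b'M+s$, with $b,b'\in\{0,\dots,L\}$ the block indices and $r,s\in\{1,\dots,M\}$ the internal positions. By (\ref{eq_01})--(\ref{eq:07}) the $(R,C)$ entry of $\mathcal{T}(T_0,\dots,T_L)$ is $0$ when $b<b'$ and equals $\alpha^{2^{e(R,C)}}$ when $b\ge b'$, where $e(R,C)=M(b-b')+r+s-2$. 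Several elementary facts about this exponent will be used: it splits as $e(R,C)=(R-1)+g(C)$ with $g(C):=s-Mb'-1$ depending on $C$ alone; the maps $R\mapsto R-1$ and $C\mapsto g(C)$ are injective on $\{1,\dots,(L+1)M\}$ (the second because $|s-s'|\le M-1<M$ for distinct internal positions); and at the nonzero positions $0\le e(R,C)\le M(L+2)-2$ while $g(C)\le M-1$.

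Now fix a non-trivial minor, that of the submatrix $A$ on rows $R_1<\dots<R_m$ and columns $C_1<\dots<C_m$, and expand $|A|=\sum_{\sigma}\sgn(\sigma)\prod_{i=1}^m A_{i\sigma(i)}$. Only the $\sigma$ producing a non-trivial term contribute, and each such term equals $\sgn(\sigma)\,\alpha^{E_\sigma}$ with $E_\sigma=\sum_{i=1}^m 2^{e(R_i,C_{\sigma(i)})}$; thus $|A|=g(\alpha)$, where $g(z):=\sum_{\sigma}\sgn(\sigma)\,z^{E_\sigma}\in\mathbb Z[z]$, the sum ranging over the non-trivial $\sigma$. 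It therefore suffices to prove that $g$ is a nonzero polynomial with leading coefficient $\pm1$ and that $\deg g<N$, where $N=\deg\rho$: the reduction $\bar g\in\mathbb F_p[z]$ then has the same degree and a nonzero leading coefficient, so $\rho\nmid\bar g$, and Theorem~\ref{th:classic_finite_fields} yields $|A|=\bar g(\alpha)\neq0$. The degree bound follows at once from the splitting of $e$: since $e(R_i,C_{\sigma(i)})=(R_i-1)+g(C_{\sigma(i)})$, $g\le M-1$, and $R_1-1<\dots<R_m-1$ are distinct elements of $\{0,\dots,(L+1)M-1\}$, every non-trivial $\sigma$ satisfies
\[E_\sigma=\sum_{i=1}^m 2^{R_i-1}\,2^{g(C_{\sigma(i)})}\le 2^{M-1}\sum_{i=1}^m 2^{R_i-1}\le 2^{M-1}\sum_{j=0}^{(L+1)M-1}2^{j}=2^{M(L+2)-1}-2^{M-1}<2^{M(L+2)-1}\le N,\]
whence $\deg g<N$.

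The heart of the proof is the uniqueness of the top-degree term: $\max_\sigma E_\sigma$, over the non-trivial $\sigma$, is attained by a single permutation $\sigma^{\ast}$, so the coefficient of $z^{\deg g}$ in $g$ equals $\sgn(\sigma^{\ast})=\pm1$ and, in particular, $g\neq0$. Writing $u_i:=R_i-1$ (so $u_1<\dots<u_m$) and $v_j:=g(C_j)$ (pairwise distinct), one has $E_\sigma=\sum_{i=1}^m 2^{u_i}2^{v_{\sigma(i)}}$, and over all permutations the strict rearrangement inequality — the $u_i$ strictly increasing and the $v_j$ distinct — makes this maximal at the unique $\sigma^{\ast\ast}$ pairing the $i$-th smallest $u$ with the $i$-th smallest $v$; when $\sigma^{\ast\ast}$ gives a non-trivial term we are done. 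In general one must respect the admissibility constraint — $\sigma$ is non-trivial exactly when the block index of $R_i$ is at least that of $C_{\sigma(i)}$ for every $i$ — which, the rows and columns being sorted, cuts out a staircase-shaped set of admissible pairings (the columns available to $R_i$ are an initial segment of $C_1,\dots,C_m$, growing with $i$). I would then obtain uniqueness of the admissible maximiser by an exchange argument on this staircase: transposing the two matched columns of any inversion $i<i'$ (meaning $v_{\sigma(i)}>v_{\sigma(i')}$) strictly increases $E_\sigma$, since $\bigl(2^{u_{i'}}-2^{u_i}\bigr)\bigl(2^{v_{\sigma(i)}}-2^{v_{\sigma(i')}}\bigr)>0$, and the monotonicity of the staircase should let every such improving move — a transposition, or else an alternating cycle if the transposition leaves the admissible set — be carried out admissibly, so that an admissible maximiser can have no removable inversion and is hence unique. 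Verifying that an admissibility-preserving improving move is always available at a non-optimal admissible pairing (ties among non-maximal admissible pairings do occur, so some care is needed that they never reach the maximum) is the step I expect to be the main obstacle; note that the doubly-exponential pattern of the exponents $2^0,2^1,2^2,\dots$ in (\ref{eq_01}) is used only in the degree estimate of the previous paragraph, which is precisely what fixes the field size in the statement.

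Granting the uniqueness, $g(z)=\sgn(\sigma^{\ast})\,z^{E_{\sigma^{\ast}}}+(\text{lower-order terms})$ is nonzero of degree $E_{\sigma^{\ast}}<N$ with leading coefficient $\pm1$, so $|A|\neq0$ by the reduction above. Since the non-trivial minor was arbitrary, $\mathcal{T}(T_0,\dots,T_L)$ is superregular over $\F$ whenever $|\mathbb F|\ge p^{2^{M(L+2)-1}}$, which is the assertion of the theorem.
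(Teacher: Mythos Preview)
Your overall strategy coincides with the paper's: regard each non-trivial minor as a polynomial in $\alpha$ over $\mathbb{F}_p$, show it has a unique monomial of top degree (hence leading coefficient $\pm1$), bound that degree below $N=\deg\rho$, and invoke Theorem~\ref{th:classic_finite_fields}. Your additive splitting $e(R,C)=(R-1)+g(C)$ is clean and is essentially the paper's passage to the column-permuted matrix $\overline{{\cal T}}$, in which the exponent becomes $R+C'-2-ML$, strictly monotone in each index (this is property~(\ref{prop})). Your degree estimate $E_\sigma\le 2^{M-1}\sum_i 2^{R_i-1}<2^{M(L+2)-1}$ is correct and simpler to derive than the paper's geometric-series bound $\beta<\tfrac{4}{3}\cdot2^{M(L+2)-2}$; the paper's is tighter, but both suffice for the theorem as stated.

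The genuine gap is exactly where you locate it: uniqueness of the admissible maximiser. The transposition of an inversion need not remain admissible---after swapping $\sigma(i)\leftrightarrow\sigma(i')$ you need $\sigma(i')\le\ell_i$, which is not implied by $\sigma(i')\le\ell_{i'}$ and the staircase monotonicity. Already for $M=1$, $L=2$ the full $3\times3$ minor of ${\cal T}$ has a single admissible permutation (the identity), and in your coordinates $v_j=g(C_j)=1-C_j$ is decreasing, so that permutation carries inversions that no admissible transposition removes; thus ``no removable inversion'' does not by itself single out the maximiser. Your alternating-cycle repair is the right instinct but is not yet an argument: one must exhibit, at every non-optimal admissible $\sigma$, an admissible improving cycle and show convergence to a unique $\sigma^\ast$. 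The paper bypasses this with a direct greedy recursion. Writing the submatrix in the block form~(\ref{Mgeral}), with zero blocks $O_i$ of width $l_i$ sitting above nonzero strips $A_i$ of height $k_i$, it proves that $\overline\sigma(m)=m$ whenever $l_i<k_i$ for all $i$ (Case~1, via an explicit improving chain built from~(\ref{prop})), and that $\overline\sigma(m)=l_{\bar\imath}$ for the least $\bar\imath$ with $l_{\bar\imath}=k_{\bar\imath}$ otherwise (Case~2: the first $l_{\bar\imath}$ columns are then forced onto the last $l_{\bar\imath}$ rows, reducing to Case~1 on that block). Deleting row $m$ and column $\overline\sigma(m)$ and iterating pins down $\overline\sigma$ uniquely. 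Either a rigorous completion of your exchange/cycle argument or this greedy determination of $\overline\sigma(m)$ would close the gap.
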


\begin{proof}
Let $[t_{L1} \ \cdots \ t_{LM} | \cdots | t_{11} \ \cdots \ t_{1M} | t_{01} \ \cdots \ t_{0M}]$ denote the columns of ${\cal T}( T_0, \dots  , T_{L})$ and define $\overline{\cal T}( T_0, \dots  , T_{L})=[t_{01} \ \cdots \ t_{0M} |  t_{11} \ \cdots \ t_{1M} | \cdots | t_{L1} \ \cdots \ t_{LM}]$, i.e., set
\begin{eqnarray*}
&&\overline{\cal T}( T_0, \dots  , T_{L})= \\
&& \!\!\!\!\!\!\left[
  \begin{array}{ccc|c|ccc|ccc}
    0 & \cdots & 0 & \cdots & 0 & \cdots & 0 & \alpha^{2^0} & \cdots & \alpha^{2^{M-1}} \\
    0 & \cdots & 0 & \cdots & 0 & \cdots & 0 & \alpha^{2^1} & \cdots & \alpha^{2^{M}} \\
    \vdots & \ddots & \vdots & \ddots & \vdots & \ddots & \vdots & \vdots & \ddots & \vdots \\
    0 & \cdots & 0 & \cdots & 0 & \cdots & 0 & \alpha^{2^{M-1}} & \cdots & \alpha^{2^{2M-2}} \\
    \hline
    0 & \cdots & 0 & \cdots & \alpha^{2^0} & \cdots & \alpha^{2^{M-1}} & \alpha^{2^{M}} & \cdots & \alpha^{2^{2M-1}} \\
    0 & \cdots & 0 & \cdots &\alpha^{2^1} & \cdots & \alpha^{2^{k}} & \alpha^{2^{M+1}} & \cdots & \alpha^{2^{2M}} \\
    \vdots & \ddots & \vdots & \ddots & \vdots & \ddots & \vdots & \vdots & \ddots & \vdots \\
    0 & \cdots & 0 & \cdots &\alpha^{2^{M-1}} & \cdots & \alpha^{2^{2M-2}} & \alpha^{2^{2M-1}} & \cdots & \alpha^{2^{3M-2}} \\
    \hline
   \vdots & \ddots & \vdots & \ddots & \vdots & \ddots & \vdots & \vdots & \ddots & \vdots \\
   \hline
   \alpha^{2^0} & \cdots & \alpha^{2^{M-1}} & \cdots &\alpha^{2^{M(L-1)}} & \cdots & \alpha^{2^{ML-1}} & \alpha^{2^{ML}} & \cdots & \alpha^{2^{M(L+1) -1}}\\
   \alpha^{2^1} & \cdots & \alpha^{2^{M}} & \cdots &  \alpha^{2^{M(L-1)+1}} & \cdots & \alpha^{2^{ML}} & \alpha^{2^{ML+1}} & \cdots & \alpha^{2^{ML+k}}\\
   \vdots & \ddots & \vdots & \ddots & \vdots & \ddots & \vdots & \vdots & \ddots & \vdots \\
   \alpha^{2^{M-1}} & \cdots & \alpha^{2^{2M-2}} & \cdots & \alpha^{2^{ML-1}} & \cdots & \alpha^{2^{M(L+1)-2}}& \alpha^{2^{M(L+1)-1}} & \cdots & \alpha^{2^{M(L+2)-2}}
  \end{array}
\right].
\end{eqnarray*}

Next, we show that $\overline{{\cal T}}( T_0, \dots  ,T_{L})$ is superregular. Obviously, this readily implies that ${\cal T}( T_0, \dots  , T_{L})$ is superregular as well.

Let $A=[\mu_{ij}]$ be a square submatrix of $\overline{{\cal T}}( T_0, \dots  , T_{L})$ of order $m\leq M(L+1)$. Note that from the particular structure of the proposed matrix $\overline{{\cal T}}( T_0, \dots  , T_{L})$ it follows that

\begin{equation}\label{prop}
\mu^2_{ij'}\leq \mu_{ij}\hspace{5mm}\mbox{and}\hspace{5mm}\mu^2_{i'j}\leq \mu_{ij}\hspace{5mm}\mbox{if}\hspace{5mm}i'<i\hspace{5mm}\mbox{and}\hspace{5mm}j'< j.
\end{equation}

Consider $m>1$ otherwise the proof is trivial.
% Let $S_m$ be the symmetric group of order $m$. The determinant of $A$ is given by
%\begin{equation}\label{eq_000}
%\sum_{\sigma\in S_m}(-1)^{\sgn(\sigma)}\mu_{1\sigma(1)}\cdots \mu_{m\sigma(m)}.
%\end{equation}
Write $A$ as a block matrix in the following form
\begin{equation}\label{Mgeral}
A=\left[\begin{tabular}{cccc|c}
& & $O_1$ & &  \multirow{5}{*}{$A_0$}\\\cline{1 - 4}
& $O_2$ &  & $\vline\hfill$\multirow{4}{*}{$A_1$} & \\\cline{1 -3}
& $\vdots$ &  & $\vline\hfill$ & \\\cline{1 - 1}
$O_h$ & $\vline\hfill$ & $\cdots$ &  $\vline\hfill$ & \\\cline{1 - 1}
$A_{h}$ & $\vline\hfill$ &  & $\vline\hfill$ &
\end{tabular}
\right],
\end{equation}
where, for each $1\leq i\leq h$, $O_i$ is a null matrix with $l_i$ columns and, for each $0\leq j\leq h$, $A_j$ is a matrix with $k_j$ rows and no entry equal to zero. We have $l_1>\dots>l_h$ and $m=k_0>k_1>\dots >k_h.$
The minor $|A|$ being nontrivial implies $k_i\geq l_i$ for any $1\leq i\leq h$.
\\

Notice that each term of the determinant of $A$ given by (\ref{deter}) is zero or a power of $\alpha$. We will prove that $\overline{{\cal T}}( T_0, \dots  , T_{L})$ is superregular by showing that if there are nontrivial terms in the determinant of $A$, then there exists a \emph{unique} term with highest exponent and thus $|A|\neq 0$. Let

\[
\beta =\max_{b\in\mathbb{N}}\{b: \alpha^b= \mu_{1\sigma(1)}\cdots \mu_{m\sigma(m)} , \mbox{ for some } \sigma\in S_m\} .
\]
%for some $\overline{\sigma}\in S_m$.
%{\color{red}\underline{Diego}: Aqui, não se percebia bem o que é $\beta$}

Thus, it is enough to show that there is a unique $\overline{\sigma}\in S_m$ such that
\begin{equation}\label{eq_00}
\alpha^{\beta}=\mu_{m \overline{\sigma}(m)}\mu_{m-1 \overline{\sigma}(m-1)}\cdots \mu_{1 \overline{\sigma}(1)}.
\end{equation}
To this end, we first prove that $\overline{\sigma}(m)$ is \emph{\textbf{uniquely}} determined by the following rule:

\underline{\textbf{Case 1:}} If $h=0$ or $l_i<k_i$ for $i=1,\dots, h$, then $\overline{\sigma}(m)=m$.
\\

\emph{Proof Case 1:} Take $\widehat{\sigma}\in S_m$ with
$$
\mu_{m \widehat{\sigma}(m)}\mu_{m-1 \widehat{\sigma}(m-1)}\cdots \mu_{1 \widehat{\sigma}(1)}\neq 0,
$$
with $\mu_{m \widehat{\sigma}(m)} \neq \mu_{m m}$. Let $\widehat{\beta}\in \mathbb{N}$, such that
\[\alpha^{\widehat{\beta}}= \mu_{m \widehat{\sigma}(m)}\mu_{m-1 \widehat{\sigma}(m-1)}\cdots \mu_{1 \widehat{\sigma}(1)} .
\]

Since $h=0$ or $l_j<k_j$ for all $j=1,\dots, h$ then $\mu_{(m-i)i}\neq 0$ for any $i=1, \dots, m-1$ (if for some $i$, $\mu_{(m-i)i}=0$ then there exists $j\in\{1, \dots, h\}$ such that $l_j\geq i$ and $k_j\leq i$, a contradiction).

Construct $\widetilde{\sigma}\in S_m$ recursively, as follows:

\begin{enumerate}
\item Define $\delta_1=m$ and while $\mu_{\widehat{\sigma}^{-1}(\delta_i)\widehat{\sigma}(m)}= 0$, let
\[\delta_{i+1}=\widehat{\sigma}\left (\max_{j\geq m-\widehat{\sigma}^{-1}(\delta_i)}{\widehat{\sigma}^{-1}(j)}\right ).\]
Let $i_0$ be the first integer such that $\mu_{\widehat{\sigma}^{-1}(\delta_{i_0})\widehat{\sigma}(m)}\neq 0$;
\item $\widetilde{\sigma}(m)=m$ and $\widetilde{\sigma}(\widehat{\sigma}^{-1}(\delta_{m_0}))=\widehat{\sigma}(m)$;
\item For $1\leq i\leq i_0$,  $\widetilde{\sigma}(\widehat{\sigma}^{-1}(\delta_i))=\delta_{i+1}$;
\item For $i\not\in I=\{\widehat{\sigma}^{-1}(\delta_i)\; \mid\; i=1, \dots,i_0\}$,  $\widetilde{\sigma}(i)=\widehat{\sigma}(i)$.
\end{enumerate}

Clearly, $\widehat{\sigma}^{-1}(\delta_i)>\widehat{\sigma}^{-1}(\delta_{i-1})$ and by (\ref{prop}),
\[\mu_{m\widehat{\sigma}(m)}\prod_{i=1}^{i_0}\mu_{\widehat{\sigma}^{-1}(\delta_i)\delta_i}\leq\mu_{mm},\]

Therefore
\begin{align*}
\alpha^{\widehat{\beta}} & = \prod_{i\in I} \mu_{i\widehat{\sigma}(i)} \prod_{i\not\in I} \mu_{i\widehat{\sigma}(i)} \\
& \leq  \mu_{mm} \prod_{i\not\in I\cup\{m\}} \mu_{i\widehat{\sigma}(i)} \\
& <  \mu_{m\widetilde{\sigma}(m)} \prod_{i\not\in I\cup\{m\}} \mu_{i\widetilde{\sigma}(i)} \prod_{i\in I} \mu_{i\widetilde{\sigma}(i)}
\end{align*}

which implies that $\widehat{\beta}$ is not a maximum, that is $\widehat{\beta}<\beta$.
\\

\underline{\textbf{Case 2:}} If $l_i=k_i$ for some $i \in \{ 1,\dots,h \}$, then $\overline{\sigma}(m)=l_{\overline{i}}$, where $\overline{i}$ is the minimum $i\in\{1,\dots, h\}$ such that $l_i=k_i$.
\\

\emph{Proof Case 2:}  Take $\widehat{\sigma}\in S_m$ with
$$
\mu_{m \widehat{\sigma}(m)}\mu_{m-1 \widehat{\sigma}(m-1)}\cdots \mu_{1 \widehat{\sigma}(1)}\neq 0,
$$
and $\mu_{m \widehat{\sigma}(m)} \neq \mu_{m l_{\overline{i}}}$. Let
\[\alpha^{\widehat{\beta}}= \mu_{m \widehat{\sigma}(m)}\mu_{m-1 \widehat{\sigma}(m-1)}\cdots \mu_{1 \widehat{\sigma}(1)} .
\]

It is clear that in this case $\widehat{\sigma}(m) \widehat{\sigma}(m-1) \cdots \widehat{\sigma}(m- l_{\overline{i}} +1)$ necessarily belong to the set $\{ 1,\dots, l_{\overline{i}}   \}$. Hence, one can consider the matrix $A'$ form by the first $l_{\overline{i}}$ columns and the last $l_{\overline{i}}$ rows of $A$. Applying the previous reasoning it is straightforward to see that we are now in the situation of the case 1 for the new matrix $A'$. Thus, $\mu_{m\overline{\sigma}(m)}= \mu_{m l_{\overline{i}}}$.
\\

Once $\overline{\sigma}(m)$ has been uniquely determined, we can remove from $A$ its $m$-th row and its $\overline{\sigma}(m)$-th column to obtain a new square matrix $A_1$ of order $m-1$. We follow the same previous arguments applied to $A_1$ instead of to $A$ to determine $\overline{\sigma}(m-1)$. In this way we can uniquely determine $\overline{\sigma}\in S_m$ and therefore prove the existence of a unique maximum in the terms of (\ref{deter}).

Next, we prove the bound on the size of a field $\F$ in order to ${\cal T}( T_0, \dots  , T_{L})$ to be superregular over $\F$.

We just proved that there exists a unique term with highest exponent in each nontrivial determinant of every submatrix $A=[\mu_{ij}]$ of ${\cal T}( T_0,T_1, \dots  , T_{L})$. Let
\begin{equation}\label{eq_02}
\alpha^{\beta} =\mu_{(r) \overline{\sigma}(r)}\mu_{(r-1) \overline{\sigma}(r-1)}\cdots \mu_{1 \overline{\sigma}(1)}
\end{equation}
for some $\overline{\sigma}\in S_{r}$ with $2 \leq r \leq (L+1)M$, be the highest term one can find.
Define $R_k=\{(i,j)\in \N^2\ | \ 1\leq i,j \leq k \mbox{ and } i=k \mbox{ or } j=k   \}$ and $R(\overline{\sigma})=\{ (i,j)\in \N^2\ | \ (i,j)=(t,\overline{\sigma}(t)) \mbox{ for some } t\in \{ 1,2,\dots, r \} \}$. It follows from the properties of the matrix ${\cal T}( T_0,T_1, \dots  , T_{L})$ that
$$
\displaystyle \prod_{(i,j)\in R_k \cap R(\overline{\sigma})} \mu_{ij} \leq \alpha^{2^{M(L+2)- 2(r-k+1)}}
$$
for  $k=1,\dots,r$. Hence,
\begin{eqnarray*}
\alpha^{\beta} &=& \displaystyle \prod_{k=1}^r \prod_{(i,j)\in R_k \cap R(\overline{\sigma})} \mu_{ij}  \leq \alpha^{2^{M(L+2)-2}} \alpha^{2^{M(L+2)-4}} \cdots \alpha^{2^{M(L+2)-2r}}  < \alpha^{(2^{M(L+2)-2 } )( \sum_{i=0}^{\infty} 4^{-i})}\\
   &=& \alpha^{(2^{M(L+2)-2})( \frac{4}{3})} < \alpha^{2^{M(L+2)-1}} .
\end{eqnarray*}
So $\beta < 2^{m(L+2)-1}$.
This means that the maximum exponent of $\alpha$ appearing in the determinants of the submatrices of ${\cal T}( T_0,T_1, \dots  , T_{L})$ is upper bounded by $2^{M(L+2)-1}$.

Notice that deg$(\rho(z))\geq 2^{M(L+2)-1}$. If ${\cal T}( T_0,T_1, \dots  , T_{L})$ is not superregular over $\F$ then there exists a nontrivial determinant $f(\alpha)= \displaystyle \sum_{i=0}^{2^{M(L+2)-1}} \epsilon_i \alpha_i$, $\epsilon_i \in \{0,1,2, \dots, p-1\}$ of a submatrix $A$ of ${\cal T}( T_0,T_1, \dots  , T_{L})$ such that $f(\alpha)=0$. By Theorem \ref{th:classic_finite_fields} it follows that $\rho(z)|f(z)$ which contradicts the fact that the degree of $f(z)$ is less than $2^{M(L+2)-1}$.\end{proof}

\begin{remark}
Note that if $A$ is a submatrix of a matrix $B$ of the form (\ref{eq:04}) or (\ref{eq:07}) then $|A|$ is a trivial minor of $B$ if it contains zeros in its diagonal, and therefore in order to check the superregularity of $B$ it is enough to verify the determinant of the submatrices with no zero elements in its diagonal. More concretely, it can be checked that if $A=[\mu_{ij}]$ is a square submatrix of order $r$ of a matrix of the form (\ref{eq:04}) or (\ref{eq:07}) then $|A|$ is a nontrivial determinant if and only if its indices $i_1< i_2 < \cdots < i_r \leq (j+1)(n-k)$ and $j_1 < \cdots < j_r \leq (j+1)k$ satisfies $j_t \leq \left \lfloor \dfrac{i_t}{n-k} \right \rfloor k$ for $t=1, \dots, r$.
\end{remark}

It is well-known that if $N$ is an integer and $p$ a prime number then there exists a finite field $\F$ with $p^N$ elements and therefore there exists a finite field $\F$ such that $|\F|=p^{(2^{M(L+2)-1})}$. However, it follows from the proof of Theorem \ref{th:mainFieldSize} that it is enough to have $|\F|>p^{((2^{M(L+2)-2})( \frac{4}{3}))}$ in order to ${\cal T}( T_0,T_1, \dots  , T_{L})$ to be superregular. It can be checked using computer algebra programs that there are particular examples (for small values of $(n,k,\delta)$) of superregular matrices that require a much smaller field size, see for instance \cite[Example 3.10]{Gluesing-Luerssen2006}. However, the proposed superregular matrices can be constructed for any given characteristic $p$ and parameters $(n,k,\delta)$ and therefore provides a general construction. Note that the superregular matrix $S_r$ given in (\ref{eq_06}) requires, in general, a large characteristic $p(r)$.

We are now in the position to present a new class of MDP convolutional codes. The result easily follows from Theorem \ref{th:MDP_char}, Theorem \ref{th:mainFieldSize} and the fact that submatrices of a superregular matrix inherit the superregularity property.

\begin{corollary} \label{corfinal}
Let $(n,k,\delta)$ be given and let $T_\ell=[t_{ij}^\ell]$, $i,j=1,2,\dots, m$ and $\ell=0,1,2\dots,L$ be the entries of the matrix $T_\ell$ as in (\ref{eq_01}). Define ${\bar H}_\ell=[t_{ij}^\ell]$   $i=1,2,\dots, (n-k)$, $j=1,2,\dots, k$ and $\ell=0,1,2\dots,L$. If $| \mathbb F | \geq p^{(2^{M(L+1)+n-2})}$ then, the convolutional code ${\cal C}=\ker_{\mathbb F[z]}  [A(z) \ B(z)] $ where $A(z)=\displaystyle \sum_{i=0}^{\nu} A_i z^i\in \mathbb F[z]^{(n-k)\times (n-k)}$ and $B(z)=\displaystyle \sum_{i=0}^{\nu} B_i z^i\in \mathbb F[z]^{(n-k)\times k}$, with $\nu = \frac{\delta}{n-k}$, $A_0=I_{n-k}$, $A_i \in \mathbb F^{(n-k)\times (n-k)}$, $i = 1, \dots, \nu$ obtained by solving the equations
$$
[A_{\nu} \cdots A_1]\left[\begin{matrix} {\bar H}_{L- \nu} & \cdots & {\bar H}_1 \\
{\bar H}_{L- \nu+1} & \cdots & {\bar H}_2 \\ \vdots & & \vdots \\ {\bar H}_{L-1} & \cdots & {\bar H}_{\nu} \\  \end{matrix} \right] = -[{\bar H}_L \cdots {\bar H}_{\nu+1}],
$$
and $B_i=A_0{\bar H}_i + A_1 {\bar H}_{i-1} + \cdots + A_i {\bar H}_0, i=0, \dots, \nu,
$
is an MDP convolutional code of rate $k/n$ and degree $\delta$.
\end{corollary}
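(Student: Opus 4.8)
The plan is to derive the corollary as a short consequence of Theorems~\ref{th:MDP_char} and~\ref{th:mainFieldSize} together with the fact that a submatrix of a superregular matrix is superregular. First I would note that a rate $k/n$ code has $1\le k\le n-1$, so $M=\max\{n-k,k\}\le n-1$ and hence $M(L+1)+n-2\ge M(L+2)-1$; consequently the hypothesis $|\F|\ge p^{(2^{M(L+1)+n-2})}$ implies $|\F|\ge p^{(2^{M(L+2)-1})}$, so Theorem~\ref{th:mainFieldSize} gives that ${\cal T}(T_0,\dots,T_L)$ is superregular. Since each ${\bar H}_\ell$ is, by definition, the leading $(n-k)\times k$ block of $T_\ell$, the matrix ${\cal {\bar H}}({\bar H}_0,\dots,{\bar H}_L)$ of (\ref{eq:04}) is obtained from ${\cal T}(T_0,\dots,T_L)$ by keeping, in each block-row and each block-column, only the first $n-k$ rows and the first $k$ columns; being a submatrix of a superregular matrix, it is superregular.

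Next I would show that the displayed linear system is solvable and that the resulting $A(z),B(z)$ produce, in Theorem~\ref{th:MDP_char}, precisely the matrix ${\cal {\bar H}}({\bar H}_0,\dots,{\bar H}_L)$ above. Writing $C$ for the coefficient matrix, its $(r,c)$ block equals ${\bar H}_{L-\nu+r-c}$ for $1\le r\le\nu$, $1\le c\le L-\nu$, and since these indices lie in $\{1,\dots,L-1\}$, $C$ is the submatrix of ${\cal {\bar H}}({\bar H}_0,\dots,{\bar H}_L)$ on block-rows $L-\nu+1,\dots,L$ and block-columns $1,\dots,L-\nu$; in particular every entry of $C$ is a power of $\alpha$, so every square submatrix of $C$ has a nontrivial determinant, hence is nonsingular by the superregularity of ${\cal {\bar H}}({\bar H}_0,\dots,{\bar H}_L)$. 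As $(L-\nu)k=\lfloor\delta/k\rfloor k\le\delta=\nu(n-k)$, the matrix $C$ has at least as many rows as columns and an invertible square submatrix of order $(L-\nu)k$, so it has full column rank and the transposed system $C^{\top}X^{\top}=D^{\top}$, with $D=-[{\bar H}_L \cdots {\bar H}_{\nu+1}]$, is solvable. Fixing such $A_1,\dots,A_\nu$, putting $A_0=I_{n-k}$ and $B_i=\sum_{j=0}^{i}A_j{\bar H}_{i-j}$ as in the statement, a comparison of coefficients of $z^i$ in $A(z)\big(\sum_{i\ge 0}{\bar H}_i z^i\big)$ yields $B_i$ for $i\le\nu$ and yields $\sum_{j=0}^{\nu}A_j{\bar H}_{i-j}=0$ for $\nu<i\le L$ (these last equalities being exactly the linear system). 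Since $A_0=I_{n-k}$, $A(z)$ is a unit in $\F[[z]]$, so $A(z)^{-1}B(z)={\bar H}_0+{\bar H}_1 z+\cdots+{\bar H}_L z^L+O(z^{L+1})$; thus the matrix attached to ${\cal C}=\ker_{\F[z]}[A(z)\ B(z)]$ in Theorem~\ref{th:MDP_char} is exactly the superregular matrix from the first step.

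It then remains to identify the parameters of ${\cal C}$ and to invoke Theorem~\ref{th:MDP_char}. The divisibility $(n-k)\mid\delta$ is assumed; $A_0=I_{n-k}$ makes $[A(z)\ B(z)]$ of full row rank $n-k$ over $\F(z)$, so ${\cal C}$ has rank $k$, i.e.\ rate $k/n$; and $\deg A(z),\deg B(z)\le\nu$ gives $\delta({\cal C})\le\nu(n-k)=\delta$, while the reverse inequality follows once one checks that the $A_i$ can be chosen so that $[A(z)\ B(z)]$ is a minimal (left prime, row reduced) parity-check matrix of ${\cal C}$ with all row degrees equal to $\nu$. Granting this, ${\cal C}$ is an $(n,k,\delta)$ code with $(n-k)\mid\delta$, the matrix ${\cal {\bar H}}({\bar H}_0,\dots,{\bar H}_L)$ attached to it is superregular, and Theorem~\ref{th:MDP_char} gives that ${\cal C}$ is MDP.

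The point I expect to need the most care is this last one: checking that $\delta({\cal C})=\delta$ and not something strictly smaller, so that Theorem~\ref{th:MDP_char} applies as stated. The most direct way is to verify that the solution $A_1,\dots,A_\nu$ can be taken so that the leading coefficient matrix $[A_\nu\ B_\nu]$ of $[A(z)\ B(z)]$ has full row rank $n-k$ (for instance with $A_\nu$ nonsingular), which makes $[A(z)\ B(z)]$ row reduced with row degrees $\nu$; left primeness and the degree computation are then routine. An alternative is to avoid using Theorem~\ref{th:MDP_char} as a black box and argue directly at the level of column distances, since the superregularity of ${\cal {\bar H}}({\bar H}_0,\dots,{\bar H}_L)$ already forces $d^c_j({\cal C})=(j+1)(n-k)+1$ for $j\le L$, whence the bounds on column distances recalled in the preliminaries give $\delta({\cal C})\ge\delta$. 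Everything else --- the reduction to superregularity of ${\cal T}(T_0,\dots,T_L)$, inheritance of superregularity by submatrices, solvability of the linear system, and the power-series identity for $A(z)^{-1}B(z)$ --- is routine once the Toeplitz structure of (\ref{eq_01}) and (\ref{eq:07}) has been written out.
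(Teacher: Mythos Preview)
Your proposal is correct and follows exactly the approach the paper uses: the paper's proof is the single sentence ``The result easily follows from Theorem~\ref{th:MDP_char}, Theorem~\ref{th:mainFieldSize} and the fact that submatrices of a superregular matrix inherit the superregularity property,'' with the construction details for $A(z),B(z)$ (solvability of the linear system, the Laurent expansion, and the code parameters) deferred to \cite[Appendix~C]{Gluesing-Luerssen2006}. You have simply unpacked those deferred details --- including the reduction $M(L+1)+n-2\ge M(L+2)-1$ via $M\le n-1$, the full-rank argument for the coefficient matrix, and the verification that $\delta({\cal C})=\delta$ --- none of which the paper spells out.
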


\begin{remark}
Details about the construction of the matrices $A(z)$ and $B(z)$ presented in Corollary \ref{corfinal} can be found in \cite[Appendix C]{Gluesing-Luerssen2006}
\end{remark}

The following example illustrates the construction of a $(5,2,3)$ MDP convolutional code.

\begin{example}
Since $n=5$, $k=2$ and $\delta = 3$, we have that $L=2$ and $\nu=1$.
Let us consider $\alpha$ a root of the primitive polynomial $x^{1024} + x^{39} + x^{37} + x^{36} + 1 \in \mathbb F_2[x]$, i.e., a primitive element over the field $\mathbb F_{2^{1024}}$ and the matrix
$$
[{\bar H}_0 \ {\bar H}_1 \ {\bar H}_2]=\left[\begin{matrix}
\alpha^{2^0} & \alpha^{2^1} & | & \alpha^{2^3} & \alpha^{2^4} & | & \alpha^{2^6} & \alpha^{2^7}\\
\alpha^{2^1} & \alpha^{2^2} &| & \alpha^{2^4} & \alpha^{2^5} & | & \alpha^{2^7} & \alpha^{2^8}  \\
\alpha^{2^2} & \alpha^{2^3} & | & \alpha^{2^5} & \alpha^{2^6} & | &  \alpha^{2^8} & \alpha^{2^9}
\end{matrix}\right]
$$
over $\mathbb F_{2^{1024}}$. Considering $A(z) = I_3 + A_1 z$ such that $A_1 {\bar H}_1 = -{\bar H}_2$, where, a possible choice is
\begin{eqnarray*}
A(z)= & \\
\left[ \begin{matrix} 1 & 0 & 0 \\ 0 & 1 & 0 \\ 0 & 0 & 1  \end{matrix}\right] + &\frac{1}{\alpha^{2^3+2^5}-\alpha^{2^5}} \left[ \begin{matrix} -\alpha^{2^5+2^6} + \alpha^{2^4+2^7} & - \alpha^{2^5+2^7} + \alpha^{2^4+2^8} & -\alpha^{2^5+2^8} + \alpha^{2^4+2^9}\\ \alpha^{2^4+2^6}- \alpha^{2^3+2^7} & \alpha^{2^4+2^7} - \alpha^{2^3+2^8} & \alpha^{2^4+2^8} - \alpha^{2^4+2^8} - \alpha^{2^3+2^9} \\ 0 & 0 & 0  \end{matrix}\right]z,
\end{eqnarray*}
and $B(z)=B_0+ B_1 z$ such that $B_0 = {\bar H}_0$ and $B_1 = {\bar H}_1 + A_1 {\bar H}_0$,
we have that
$$ {\cal C} = \ker_{\mathbb F[z]}  [A(z) \; B(z)]
$$
is a $(5,2,3)$ MDP convolutional code.
\end{example}

\section{Conclusions}

There is a type of superregular matrices that are essential for the construction of MDP convolutional codes. However, very little is understood about how to construct these matrices and how large a finite field must be, so that a superregular matrix of a given order can exist over that field. In this paper, we have presented a new class of MDP $(n, k, \delta)$ convolutional codes, such that $(n-k) | \delta$ , by means of the construction of a novel type of superregular matrices over a field of any characteristic. We also
established a bound for the size of the field needed for these matrices to be superregular. %We have also shown how in many cases the proposed bound is closed to or even smaller (less?) than the one conjectured in \cite{Hutchinson2008}.

%\bibliographystyle{plain}
%\bibliography{code1,Ref-Articles,Ref-Climent-2,Ref-PhDT-MT}

\end{document}